\newtheorem{definition}{Definition}
\newtheorem{assumption}{Assumption}
\newtheorem{theorem}{Theorem}
\newtheorem{lemma}{Lemma}
\newtheorem{proposition}{Proposition}
\title{Exam Readiness Index (ERI): A Theoretical Framework for a Composite, Explainable Index}
\author{Ananda Prakash Verma\\ \texttt{anandaprakashverma@gmail.com}\\}
\date{\today}
\begin{document}
\maketitle

\begin{abstract}
We present a theoretical framework for an \emph{Exam Readiness Index} (ERI): a composite, blueprint-aware score $R \in [0,100]$ that summarizes a learner's readiness for a high-stakes exam while remaining interpretable and actionable. The ERI aggregates six signals---Mastery $(M)$, Coverage $(C)$, Retention $(\mathcal{R})$, Pace $(P)$, Volatility $(V)$, and Endurance $(E)$---each derived from a stream of practice and mock-test interactions. We formalize axioms for component maps and the composite, prove monotonicity, Lipschitz stability, and bounded drift under blueprint re-weighting, and show existence and uniqueness of the optimal linear composite under convex design constraints. We further characterize confidence bands via blueprint-weighted concentration and prove compatibility with prerequisite-admissible curricula (Knowledge Space / learning-space constraints). The paper focuses on theory; empirical study is left to future work.
\end{abstract}

\section{Introduction}
High-stakes exam preparation hinges on answering four questions: \emph{Am I ready now? Ready for what? What's limiting me? What should I do next?} We propose a principled, explainable readiness index that is \emph{blueprint-aware}: it respects an official syllabus with section/topic weights and pacing rules, and it decomposes into human-interpretable components that align with cognitive and operational constructs (ability, coverage, forgetting, timing, stability, endurance).

This paper makes four contributions:
\begin{enumerate}[leftmargin=*]
    \item \textbf{Axiomatic components.} We specify component maps from interaction streams to $[0,1]$ with regularity properties (boundedness, Lipschitz continuity, monotonicity).
    \item \textbf{Blueprint-aware composite.} We define a linear ERI $R = \sum_{i} \alpha_i X_i$ with design constraints and show stability and interpretability guarantees.
    \item \textbf{Confidence and drift.} We derive blueprint-weighted concentration bounds for component estimates and a bound on ERI drift under blueprint version change.
    \item \textbf{Prerequisite compatibility.} We provide a sufficient condition ensuring ERI-driven recommendations respect admissibility in a learning space (outer-fringe).
\end{enumerate}

\paragraph{Scope.} Our goal is a deployable theoretical scaffold. We do not perform parameter estimation or empirical validation; those are orthogonal and deferred to future work.

\section{Related Work}
\paragraph{Knowledge Space Theory (KST).} KST models feasible \emph{knowledge states} as union-closed families of item sets with natural learning properties (accessibility, well-gradedness)~\cite{doignon,learningSpaces}. ERI is complementary: it provides a continuous readiness summary while remaining compatible with KST's admissible-next (outer fringe) principle (Section~\ref{sec:admissibility}).

\paragraph{Item Response Theory (IRT).} IRT~\cite{rasch,lord} models a latent ability parameter and item characteristics to predict responses. ERI does not require a single latent trait; instead, it aggregates multiple blueprint-weighted signals. When available, IRT-calibrated difficulties can instantiate the Mastery component $M$.

\paragraph{Spaced Repetition and Retention.} A large body of work—tracing to Ebbinghaus' forgetting curve—shows that spaced retrieval improves long-term retention~\cite{ebbinghaus,cepeda,pavlik}. ERI's Retention component $\mathcal{R}$ formalizes a blueprint-weighted exponential decay that is compatible with trainable schedulers (e.g., Leitner/FSRS-style models).

\section{Blueprints, Signals, and Notation}
Let $\mathcal{S}$ denote sections and $\mathcal{T}$ topics. A \emph{blueprint} $B$ provides nonnegative weights $\{w_t\}_{t\in \mathcal{T}}$ with $\sum_{t} w_t = 1$, a duration $T_{\mathrm{exam}}$, marking rules, and section-level pace targets $\{\tau_s\}_{s\in\mathcal{S}}$ (seconds per item). Interaction logs induce per-topic sequences $\mathcal{D}_t$ (attempt outcomes, timestamps, times), and per-section sequences $\mathcal{D}_s$.

We define six normalized component maps (all into $[0,1]$):
\begin{align*}
M &:= \sum_{t\in\mathcal{T}} w_t \, m_t(\mathcal{D}_t), \quad
C := \sum_{t\in\mathcal{T}} w_t \, c_t(\mathcal{D}_t), \quad
\mathcal{R} := \sum_{t\in\mathcal{T}} w_t \, r_t(\mathcal{D}_t),\\
P &:= \frac{1}{|\mathcal{S}|}\sum_{s\in\mathcal{S}} p_s(\mathcal{D}_s), \quad
V := v(\mathcal{D}), \quad
E := e(\mathcal{D}).
\end{align*}

\begin{assumption}[Component axioms]\label{ass:components}
For each topic $t$ and section $s$, the maps satisfy:
\begin{enumerate}[label=(A\arabic*),leftmargin=*]
    \item \textbf{Normalization:} $m_t,c_t,r_t,p_s,v,e \in [0,1]$ for any admissible data stream.
    \item \textbf{Monotonicity:} $m_t$ is nondecreasing in difficulty-adjusted success; $c_t$ is nondecreasing in (evidence-weighted) coverage; $r_t$ is nonincreasing in recency gaps; $p_s$ is nonincreasing in pace deviation; $v$ is nonincreasing in score variance; $e$ is nonincreasing in late-session degradation.
    \item \textbf{Lipschitz regularity:} For appropriate metrics $d_t,d_s$ on streams, there exist $L_m,L_c,L_r,L_p,L_v,L_e$ with
    \[
        |m_t(\mathcal{D}_t)-m_t(\mathcal{D}'_t)| \le L_m\, d_t(\mathcal{D}_t,\mathcal{D}'_t), \quad \text{etc.}
    \]
    \item \textbf{Blueprint separability:} Topic-level components aggregate linearly with weights $\{w_t\}$.
\end{enumerate}
\end{assumption}

\paragraph{Instantiation examples (non-essential).} One may take $r_t(\mathcal{D}_t) = \exp(-\lambda_t \Delta_t)$ with $\Delta_t$ the time since last successful retrieval and $\lambda_t>0$; map section pace to $p_s = \phi(z_s)$ for a clipped linear $\phi$ of a pace $z$-score; and define coverage gates by evidence indicators. Our results require only Assumption~\ref{ass:components}.

\section{The ERI Composite and Design Constraints}
We define the readiness index as a convex combination of components:
\begin{equation}\label{eq:eri}
R(\mathcal{D};B,\alpha) \;=\; \alpha_M M + \alpha_C C + \alpha_R \mathcal{R} + \alpha_P P + \alpha_V V + \alpha_E E,
\end{equation}
with weights $\alpha \in \mathbb{R}^6_{\ge 0}$ and $\sum_i \alpha_i = 1$.

\subsection{Axioms for the composite}
\begin{definition}[ERI axioms]\label{def:axioms}
A composite $R$ is \emph{valid} if it satisfies: (i) \textbf{Boundedness} $R\in[0,1]$; (ii) \textbf{Monotonicity} in each argument; (iii) \textbf{Blueprint coherence} $\partial R/\partial w_t \ge 0$ whenever $m_t,c_t,r_t$ weakly increase; (iv) \textbf{Scale-invariance} to monotone reparameterizations of individual components (achieved by pre-normalization).
\end{definition}

\begin{proposition}[Validity of the linear ERI]\label{prop:valid}
Under Assumption~\ref{ass:components}, the linear $R$ in \eqref{eq:eri} is valid per Definition~\ref{def:axioms}.
\end{proposition}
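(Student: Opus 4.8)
The plan is to verify the four clauses of Definition~\ref{def:axioms} one at a time; each will follow almost directly from the convexity of the weight vectors involved together with Assumption~\ref{ass:components}. First I would record the auxiliary fact that each of the six components already lies in $[0,1]$. For $M$, $C$, $\mathcal{R}$ this combines (A1), which places every $m_t,c_t,r_t$ in $[0,1]$, with (A4): each of these three is a $\{w_t\}$-weighted average over topics, and since $w_t\ge 0$ with $\sum_t w_t=1$, a convex combination of points in $[0,1]$ stays in $[0,1]$. The pace term $P$ is the uniform average $\frac{1}{|\mathcal{S}|}\sum_s p_s$ of quantities in $[0,1]$ by (A1), hence in $[0,1]$; and $V=v(\mathcal{D})$, $E=e(\mathcal{D})$ lie in $[0,1]$ directly by (A1). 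Clause~(i) is then immediate: since $\alpha\in\mathbb{R}^6_{\ge 0}$ with $\sum_i\alpha_i=1$, the $R$ of \eqref{eq:eri} is a convex combination of six numbers in $[0,1]$, so $R\in[0,1]$.

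For clause~(ii) I would note that $R$ is affine in the component tuple $(M,C,\mathcal{R},P,V,E)$ with $\partial R/\partial X_i=\alpha_i\ge 0$ for each component $X_i$, so $R$ is nondecreasing in each. Chaining this with the directional monotonicities in (A2)---$m_t$ up in difficulty-adjusted success, $c_t$ up in evidence-weighted coverage, $r_t$ down in recency gaps, $p_s$ down in pace deviation, $v$ down in score variance, $e$ down in late-session degradation---yields monotonicity of $R$ in the intended favorable directions of the raw signals.

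For clause~(iii) I would use that $m_t,c_t,r_t$ are functions of the topic stream $\mathcal{D}_t$ alone and do not themselves depend on the blueprint weights, while $P,V,E$ do not involve $\{w_t\}$ at all. Differentiating the linear aggregations gives $\partial M/\partial w_t=m_t$, $\partial C/\partial w_t=c_t$, $\partial\mathcal{R}/\partial w_t=r_t$, so that
\[
  \frac{\partial R}{\partial w_t}=\alpha_M\,m_t+\alpha_C\,c_t+\alpha_R\,r_t\;\ge\;0,
\]
since all six of $\alpha_M,\alpha_C,\alpha_R,m_t,c_t,r_t$ are nonnegative; moreover the right-hand side is nondecreasing in each of $m_t,c_t,r_t$, which is precisely the ``whenever $m_t,c_t,r_t$ weakly increase'' refinement. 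I would add one sentence addressing the simplex constraint $\sum_t w_t=1$: the operationally meaningful version is the directional statement that a feasible reweighting raising $w_t$ by $\varepsilon$ while lowering some $w_{t'}$ by $\varepsilon$ shifts $R$ by $\varepsilon\big(\alpha_M(m_t-m_{t'})+\alpha_C(c_t-c_{t'})+\alpha_R(r_t-r_{t'})\big)$, which is $\ge 0$ whenever topic $t$ weakly dominates $t'$ on $(m,c,r)$, with the unconstrained partial above being the ``$t'\to$ null topic'' limit.

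Finally, clause~(iv) holds by construction: the quantities fed into \eqref{eq:eri} are the pre-normalized component values in $[0,1]$, so replacing a raw component score by a monotone increasing reparameterization followed by the same normalization leaves the input to $R$---and hence $R$ itself---unchanged. The only point in the whole argument that needs care is the meaning of ``$\partial R/\partial w_t$'' in clause~(iii) under the simplex constraint; I would dispose of it by declaring explicitly that the partial is taken on the extension of \eqref{eq:eri} to $w\in\mathbb{R}^{|\mathcal{T}|}_{\ge 0}$ and then recording the directional corollary above, after which no ambiguity remains and the remaining verifications are routine.
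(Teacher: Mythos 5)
Your proposal is correct and follows essentially the same route as the paper's (very terse) proof: boundedness and monotonicity from nonnegativity of the $\alpha_i$, $w_t$ and normalization of the components, blueprint coherence from linearity and separability (A4), and scale-invariance from pre-normalization. The extra care you take in clause~(iii) to interpret $\partial R/\partial w_t$ on the unconstrained extension and to record the directional version under the simplex constraint is a worthwhile clarification the paper omits, but it does not change the argument.
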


\begin{proof}
Boundedness and monotonicity follow from nonnegativity and normalization. Blueprint coherence follows by linearity and separability. Scale-invariance holds as each component is normalized to $[0,1]$.
\end{proof}

\subsection{Weight selection as convex design}
Let $\mathcal{C}$ denote design constraints (e.g., minimum emphasis on mastery and coverage, fairness caps per section), and $J(\alpha)$ a strictly convex penalty (e.g., quadratic deviation from a prior $\alpha^0$ and entropy regularization for spread). Consider
\begin{equation}\label{eq:design}
\min_{\alpha \in \Delta_5} \; J(\alpha) \quad \text{s.t. } \alpha \in \mathcal{C},
\end{equation}
where $\Delta_5$ is the 5-simplex.

\begin{theorem}[Existence and uniqueness]\label{thm:unique}
If $J$ is strictly convex and $\mathcal{C}$ is convex and nonempty, the solution $\alpha^\star$ to \eqref{eq:design} exists and is unique.
\end{theorem}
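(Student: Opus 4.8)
The plan is to split the claim into an existence half, handled by a compactness/Weierstrass argument, and a uniqueness half, handled by the standard strict-convexity argument. First I would fix the feasible region $F := \Delta_5 \cap \mathcal{C}$. It is nonempty by hypothesis, convex as an intersection of convex sets, and bounded because $\Delta_5$ is. The one technical point worth flagging is closedness: I would either invoke the standing reading that the design constraints in $\mathcal{C}$ (``minimum emphasis'' lower bounds, per-section ``fairness caps'') are non-strict linear inequalities and hence $\mathcal{C}$ is closed, or else note the problem is implicitly posed over $\mathrm{cl}(\mathcal{C})$; in either case $F$ is compact.

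For existence I would then argue $J$ is continuous on $F$ (lower semicontinuity suffices). A finite strictly convex function on $\mathbb{R}^6$ is continuous on the interior of its domain, and the two concrete penalties named in the text — a quadratic deviation $\|\alpha-\alpha^0\|^2$ from a prior and an entropy term $-\sum_i \alpha_i \log \alpha_i$ with the convention $0\log 0 = 0$ — are both continuous up to and including $\partial\Delta_5$. Hence $J$ is lower semicontinuous on the compact set $F$, so by the Weierstrass extreme value theorem the infimum is attained and $\alpha^\star$ exists.

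For uniqueness, suppose for contradiction that $\alpha_1 \ne \alpha_2$ are both minimizers attaining the common optimal value $J^\star$. Convexity of $F$ gives $\bar\alpha := \tfrac12(\alpha_1+\alpha_2) \in F$, and strict convexity of $J$ gives $J(\bar\alpha) < \tfrac12 J(\alpha_1) + \tfrac12 J(\alpha_2) = J^\star$, contradicting optimality of $J^\star$. Therefore the minimizer is unique.

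The only genuine obstacle is the regularity bookkeeping: ensuring $\mathcal{C}$ (hence $F$) is closed and that $J$ neither diverges to $-\infty$ nor loses lower semicontinuity near $\partial\Delta_5$. Once those are secured — trivially so for polyhedral constraints and the smooth/entropic penalties the paper has in mind — existence reduces to Weierstrass and uniqueness to the one-line midpoint argument; in particular no first-order (KKT) conditions and no differentiability of $J$ are required.
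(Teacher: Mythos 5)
Your proof is correct and is exactly the ``standard convex-optimization result'' the paper invokes without elaboration: Weierstrass on the compact convex feasible set $F=\Delta_5\cap\mathcal{C}$ for existence, and the midpoint strict-convexity argument for uniqueness. Your observation that closedness of $\mathcal{C}$ is needed (and is not literally among the stated hypotheses, though it holds for the polyhedral constraints the paper intends) is a legitimate refinement of the theorem as written, not a flaw in your argument.
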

\begin{proof}
Standard convex-optimization result.
\end{proof}

\begin{lemma}[Lagrangian KKT form]\label{lem:kkt}
Let $J(\alpha)=\tfrac12\|\alpha-\alpha^0\|_2^2 - \eta \sum_i \alpha_i \log \alpha_i$. Then the KKT conditions yield
\[
\alpha_i^\star \propto \exp\left( \alpha_i^0/\eta - \lambda_i \right),
\]
with multipliers $\lambda_i$ enforcing the linear constraints in $\mathcal{C}$ and the simplex.
\end{lemma}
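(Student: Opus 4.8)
The plan is to write down the KKT system for \eqref{eq:design} with the stated $J$ and read the stationarity condition off in closed form. I would first record the structural facts that license KKT: take $\mathcal{C}$ to be polyhedral, say $\mathcal{C}=\{\alpha:A\alpha\le b\}$ (equalities, if any, are handled the same way), so that $\Delta_5\cap\mathcal{C}$ is a nonempty convex polytope; note that $J$ is strictly convex, since $\tfrac12\|\alpha-\alpha^0\|_2^2$ is strongly convex and the negentropy part is convex on the simplex; and assume $\mathcal{C}$ meets $\operatorname{relint}\Delta_5$ so that Slater's condition holds. Then KKT is necessary and sufficient, and the point it characterizes is the unique minimizer supplied by Theorem~\ref{thm:unique}. (I read the entropy regularizer with the sign that rewards spread, i.e.\ as negentropy $+\eta\sum_i\alpha_i\log\alpha_i$ inside the minimization, so that the term behaves as a $\log$-barrier on $\Delta_5$.)

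Next I would form the Lagrangian
\[
\mathcal{L}(\alpha,\mu,\nu,\gamma)=J(\alpha)+\mu^{\top}(A\alpha-b)+\nu\Big(\textstyle\sum_i\alpha_i-1\Big)-\gamma^{\top}\alpha,
\qquad \mu\ge 0,\ \gamma\ge 0,\ \nu\in\mathbb{R},
\]
and argue that the nonnegativity multipliers vanish: $\partial J/\partial\alpha_i$ contains the term $\eta(\log\alpha_i+1)\to-\infty$ as $\alpha_i\downarrow 0$, so no boundary point of $\Delta_5$ can satisfy stationarity; hence $\alpha^{\star}\in\operatorname{relint}\Delta_5$ and complementary slackness forces $\gamma=0$.

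Then I would write stationarity $\partial\mathcal{L}/\partial\alpha_i=0$, namely
\[
(\alpha_i^{\star}-\alpha_i^{0})+\eta\big(\log\alpha_i^{\star}+1\big)+(A^{\top}\mu)_i+\nu=0,
\]
solve for $\log\alpha_i^{\star}$, and exponentiate. Absorbing the $i$-independent constants ($\eta$ and $\nu$) into a normalizer $Z$ and defining $\lambda_i:=\tfrac1\eta\big[(A^{\top}\mu)_i+\alpha_i^{\star}\big]$ gives $\alpha_i^{\star}=Z^{-1}\exp(\alpha_i^{0}/\eta-\lambda_i)$, i.e.\ $\alpha_i^{\star}\propto\exp(\alpha_i^{0}/\eta-\lambda_i)$; the simplex constraint then determines $Z=\sum_i\exp(\alpha_i^{0}/\eta-\lambda_i)$, and primal feasibility for $\mathcal{C}$ together with $\mu_j(A\alpha^{\star}-b)_j=0$ pins down $\mu$, hence the $\lambda_i$.

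I expect the main obstacle to be that the $\lambda_i$ are not explicit: the quadratic term makes stationarity self-referential ($\alpha_i^{\star}$ reappears inside $\lambda_i$), so the displayed identity is really a fixed-point characterization, not a formula. I would dispatch this by (i) citing Theorem~\ref{thm:unique} for existence and uniqueness of that fixed point, and (ii) noting that in the small-curvature (near-prior) regime the self-term is lower order and $\lambda_i$ collapses to the pure constraint contribution $\tfrac1\eta(A^{\top}\mu)_i$. A secondary item to pin down is the constraint qualification: if $\mathcal{C}\cap\operatorname{relint}\Delta_5=\varnothing$ the Gibbs form can break (the optimum sits on a face), so this nonemptiness should be carried as a standing hypothesis alongside that of Theorem~\ref{thm:unique}.
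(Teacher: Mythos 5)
Your route is the same as the paper's (Appendix~B): form the Lagrangian for \eqref{eq:design}, impose stationarity, and exponentiate; so in that sense the approaches coincide. But your more careful execution surfaces three points the paper elides, and on one of them your version is the one that actually yields the lemma as stated. First, the sign of the entropy term: with the paper's literal $J(\alpha)=\tfrac12\|\alpha-\alpha^0\|_2^2-\eta\sum_i\alpha_i\log\alpha_i$, the stationarity condition $(\alpha_i-\alpha_i^0)-\eta(1+\log\alpha_i)+(A^\top\lambda)_i+\nu=0$ solves to $\alpha_i\propto\exp(-\alpha_i^0/\eta+\cdots)$, i.e.\ the wrong sign on $\alpha_i^0/\eta$; moreover that $J$ has Hessian $\mathrm{diag}(1-\eta/\alpha_i)$ on the simplex and so fails strict convexity near the boundary, undercutting the appeal to Theorem~\ref{thm:unique}. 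Reading the regularizer as convex negentropy $+\eta\sum_i\alpha_i\log\alpha_i$, as you do, restores both strict convexity and the stated form $\exp(\alpha_i^0/\eta-\lambda_i)$. Second, you actually dispose of the nonnegativity multipliers: the infinite slope of $\alpha\mapsto\alpha\log\alpha$ at $0$ forces an interior optimum and kills $\gamma$ by complementary slackness, a step the paper skips. Third, you are right that the displayed identity is a fixed-point characterization rather than a closed form, since the quadratic term puts $\alpha_i^\star$ back inside $\lambda_i$; the paper's phrasing suggests an explicit Gibbs solution that does not exist here. Your proof is correct under your (necessary) sign convention and standing constraint qualification; the only quibble is that $\alpha\log\alpha$ is not literally a log-barrier (it is finite at $0$), though the boundary-derivative argument you actually invoke is the right one.
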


\section{Stability and Confidence}
\subsection{Lipschitz stability}
\begin{theorem}[ERI Lipschitz constant]\label{thm:lipschitz}
Let $d(\mathcal{D},\mathcal{D}')$ be a metric that dominates per-topic/section metrics. Under Assumption~\ref{ass:components},
\[
|R(\mathcal{D}) - R(\mathcal{D}')| \le \Big(\alpha_M L_M + \alpha_C L_C + \alpha_R L_R + \alpha_P L_P + \alpha_V L_V + \alpha_E L_E\Big) \, d(\mathcal{D},\mathcal{D}').
\]
\end{theorem}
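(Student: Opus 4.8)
The plan is to decompose $R$ into its six weighted summands, apply the triangle inequality using $\alpha_i\ge 0$, and then lift the per-topic and per-section Lipschitz bounds of Assumption~\ref{ass:components}(A3) to the aggregated components via blueprint separability (A4) and the assumed domination property of $d$. First I would write
\[
|R(\mathcal{D})-R(\mathcal{D}')| \le \alpha_M|M(\mathcal{D})-M(\mathcal{D}')| + \alpha_C|C(\mathcal{D})-C(\mathcal{D}')| + \alpha_R|\mathcal{R}(\mathcal{D})-\mathcal{R}(\mathcal{D}')| + \alpha_P|P(\mathcal{D})-P(\mathcal{D}')| + \alpha_V|V(\mathcal{D})-V(\mathcal{D}')| + \alpha_E|E(\mathcal{D})-E(\mathcal{D}')|,
\]
so that it suffices to control each component difference separately.

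Next I would handle the three topic-aggregated components identically. For Mastery, $M=\sum_{t}w_t\,m_t(\mathcal{D}_t)$, so by the triangle inequality and (A3),
\[
|M(\mathcal{D})-M(\mathcal{D}')| \le \sum_{t} w_t\,|m_t(\mathcal{D}_t)-m_t(\mathcal{D}'_t)| \le L_m \sum_{t} w_t\, d_t(\mathcal{D}_t,\mathcal{D}'_t).
\]
Since $\sum_t w_t = 1$ and $d$ \emph{dominates} the per-topic metrics in the sense that $\sum_t w_t\,d_t(\mathcal{D}_t,\mathcal{D}'_t)\le d(\mathcal{D},\mathcal{D}')$ (this is precisely the content of the hypothesis on $d$), we obtain $|M(\mathcal{D})-M(\mathcal{D}')|\le L_m\,d(\mathcal{D},\mathcal{D}')$, i.e. the aggregate constant $L_M$ may be taken equal to $L_m$. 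The same argument gives $L_C=L_c$ and $L_R=L_r$. For Pace, $P=\frac{1}{|\mathcal{S}|}\sum_s p_s(\mathcal{D}_s)$ is again a convex combination (weights $1/|\mathcal{S}|$ summing to one), so the identical estimate yields $L_P=L_p$. For Volatility and Endurance, $V=v(\mathcal{D})$ and $E=e(\mathcal{D})$ are single Lipschitz maps, so (A3) together with domination gives $L_V=L_v$ and $L_E=L_e$ directly.

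Finally I would substitute these six bounds into the triangle-inequality expansion of the first step and factor out $d(\mathcal{D},\mathcal{D}')$, which produces the claimed constant $\alpha_M L_M+\alpha_C L_C+\alpha_R L_R+\alpha_P L_P+\alpha_V L_V+\alpha_E L_E$. There is no genuine obstacle here; the only point requiring care is the bookkeeping around the word ``dominates'': one must state explicitly that $d$ upper-bounds each of $\sum_t w_t d_t$, $\frac{1}{|\mathcal{S}|}\sum_s d_s$, and the stream metrics used by $v$ and $e$, since otherwise the separate constants $L_m,\dots,L_e$ would not transfer to a single common metric. I would also remark that, because each component lies in $[0,1]$ and $\sum_i\alpha_i=1$, the trivial bound $|R(\mathcal{D})-R(\mathcal{D}')|\le 1$ always holds, so the Lipschitz estimate is informative precisely when $d(\mathcal{D},\mathcal{D}')$ is small relative to $\big(\sum_i\alpha_i L_i\big)^{-1}$, and one may always take the minimum of the two.
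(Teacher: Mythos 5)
Your proof is correct and follows exactly the paper's intended route (the paper's own proof is the one-liner ``triangle inequality and linearity''), simply filling in the details: expand $R$ as a convex combination, bound each component via (A3)--(A4), and absorb the per-topic/section metrics into $d$ using the domination hypothesis. Your explicit remark on what ``dominates'' must mean for the constants to transfer is a worthwhile clarification of a point the paper leaves implicit, but it is not a different argument.
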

\begin{proof}
Triangle inequality and linearity.
\end{proof}

\subsection{Blueprint drift}
Suppose a blueprint update changes weights from $w$ to $w'$ with total-variation $\delta=\tfrac12\sum_t |w_t-w'_t|$.

\begin{proposition}[Bounded ERI drift under re-weighting]\label{prop:drift}
With components fixed,
\[
|R(\mathcal{D};w)-R(\mathcal{D};w')| \;\le\; \delta \cdot \max\{ \|m\|_\infty+\|c\|_\infty+\|r\|_\infty,\, 1 \}.
\]
If all topic components lie in $[0,1]$, the drift is at most $3\delta$ for the topic-aggregated terms.
\end{proposition}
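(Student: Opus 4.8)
The plan is to reduce $R(\mathcal{D};w)-R(\mathcal{D};w')$ to a single topic-indexed sum and then bound it with an elementary inequality. First I would note that of the six components only $M$, $C$, and $\mathcal{R}$ involve the topic weights $\{w_t\}$: the terms $P=\frac{1}{|\mathcal{S}|}\sum_{s} p_s(\mathcal{D}_s)$, $V=v(\mathcal{D})$, and $E=e(\mathcal{D})$ depend only on the (fixed) interaction stream and on section-level blueprint data that the re-weighting leaves unchanged, so they cancel in the difference. Using blueprint separability (A4) to expand $M=\sum_t w_t m_t$, $C=\sum_t w_t c_t$, $\mathcal{R}=\sum_t w_t r_t$, this gives
\[
R(\mathcal{D};w)-R(\mathcal{D};w') \;=\; \sum_{t\in\mathcal{T}} (w_t-w'_t)\bigl(\alpha_M m_t + \alpha_C c_t + \alpha_R r_t\bigr).
\]

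Next I would use that $w$ and $w'$ are probability vectors, so $\Delta_t:=w_t-w'_t$ satisfies $\sum_t \Delta_t=0$ and $\sum_t|\Delta_t|=2\delta$. The key elementary fact is that for any coefficients $a_t$ with $0\le a_t\le A$,
\[
\Bigl|\sum_t \Delta_t a_t\Bigr| \;=\; \Bigl|\sum_t \Delta_t\bigl(a_t-\tfrac{A}{2}\bigr)\Bigr| \;\le\; \tfrac{A}{2}\sum_t|\Delta_t| \;=\; \delta A,
\]
where the first equality uses $\sum_t\Delta_t=0$. I would apply this with $a_t=\alpha_M m_t+\alpha_C c_t+\alpha_R r_t$: by normalization (A1) together with $\alpha_i\le 1$ one may take $A=\|m\|_\infty+\|c\|_\infty+\|r\|_\infty$, while since $\sum_i\alpha_i=1$ and the components lie in $[0,1]$ one may instead take $A=1$; the larger of the two gives the claimed bound $\delta\cdot\max\{\|m\|_\infty+\|c\|_\infty+\|r\|_\infty,\,1\}$.

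For the second assertion I would rerun the same estimate on each topic-aggregated component separately: taking $a_t=m_t\in[0,1]$ gives $|M(\mathcal{D};w)-M(\mathcal{D};w')|\le\delta$, and likewise $|C(\mathcal{D};w)-C(\mathcal{D};w')|\le\delta$ and $|\mathcal{R}(\mathcal{D};w)-\mathcal{R}(\mathcal{D};w')|\le\delta$; summing against $\alpha_M,\alpha_C,\alpha_R\le 1$ bounds the drift of $\alpha_M M+\alpha_C C+\alpha_R\mathcal{R}$ by $3\delta$ (in fact by $(\alpha_M+\alpha_C+\alpha_R)\delta\le\delta$, so $3\delta$ is comfortable).

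There is no real obstacle: this is a routine triangle-inequality argument. The only step I would be careful to write out is the re-centering $\sum_t\Delta_t a_t=\sum_t\Delta_t(a_t-A/2)$, since this is what upgrades the naive bound $2\delta A$ to $\delta A$ and makes the stated constants attainable; a secondary point worth a line is verifying that $P$, $V$, and $E$ really are independent of $\{w_t\}$, so that holding the components fixed leaves only the topic-weighted block to vary.
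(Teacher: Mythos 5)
Your proof is correct and follows the same route as the paper's one-line argument (expand by linearity, observe that only the topic-weighted block $\alpha_M M+\alpha_C C+\alpha_R\mathcal{R}$ varies, and bound the resulting sum against the weight perturbation). The one substantive thing you add is the re-centering step $\sum_t\Delta_t a_t=\sum_t\Delta_t(a_t-A/2)$: the paper's sketch, read literally as ``bounded by the $\ell_1$-distance of weights,'' only yields $2\delta\cdot A$ since $\sum_t|w_t-w'_t|=2\delta$, so your use of $\sum_t\Delta_t=0$ (equivalently, splitting $\Delta_t$ into positive and negative parts, each of mass $\delta$) is genuinely needed to reach the stated constant $\delta$ rather than $2\delta$; the rest, including the $3\delta$ clause, is a routine application of the same estimate componentwise.
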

\begin{proof}
By linearity, the difference is a weighted sum of bounded topic terms, bounded by the $\ell_1$-distance of weights.
\end{proof}

\subsection{Confidence band via blueprint-weighted concentration}
Let $\widehat{m}_t$ estimate $m_t$ from $n_t$ independent item outcomes in topic $t$ with bounded range. By Hoeffding's inequality,
\[
\Pr\Big(|\widehat{m}_t-m_t| \ge \epsilon_t\Big) \le 2\exp(-2 n_t \epsilon_t^2).
\]
Aggregating with blueprint weights gives, for any $\epsilon>0$,
\begin{equation}\label{eq:eri-conc}
\Pr\Big(|\widehat{M}-M| \ge \epsilon\Big) \le 2\exp\!\left(-\frac{2\epsilon^2}{\sum_t w_t^2/n_t}\right).
\end{equation}
Analogous bounds hold for $\widehat{C},\widehat{\mathcal{R}}$ when estimated from bounded per-topic statistics. A union bound yields an overall ERI confidence band whose half-width shrinks with blueprint-weighted effective sample size $\left(\sum_t w_t^2/n_t\right)^{-1}$.

\section{Prerequisite-Admissible Recommendations}\label{sec:admissibility}
Let $(Q,\mathcal{K})$ be a learning space (union-closed, accessible) over items/skills with \emph{outer fringe} $F^+(X)=\{q\notin X: X\cup\{q\}\in\mathcal{K}\}$. A recommendation policy is \emph{admissible} if it selects only items whose concepts lie in $F^+(X)$ or their immediate prerequisites.

\begin{assumption}[ERI-compatible gating]\label{ass:gating}
Action maps use only concepts with nonzero coverage evidence and admissible with respect to the current state estimate $X$.
\end{assumption}

\begin{proposition}[ERI actions respect admissibility]\label{prop:admissible}
Under Assumption~\ref{ass:gating}, any action chosen by minimizing a separable surrogate loss of the form $\sum_{t} \psi_t(1-M_t,C_t,\mathcal{R}_t)$ over admissible concepts respects the outer fringe and cannot introduce prerequisite violations.
\end{proposition}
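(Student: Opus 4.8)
The plan is to reduce the claim to two facts: that Assumption~\ref{ass:gating} already confines the feasible set of the surrogate minimization to the outer fringe together with concepts already in $X$, and that a minimizer never leaves its feasible set. First I would make the feasible region explicit: write $\mathcal{A}(X)$ for the set of concepts over which $\sum_t \psi_t(1-M_t,C_t,\mathcal{R}_t)$ is minimized. By Assumption~\ref{ass:gating}, every $t\in\mathcal{A}(X)$ has nonzero coverage evidence and is admissible with respect to the current state estimate $X$; by the definition of an admissible policy this means $t\in F^+(X)$ or $t$ is an immediate prerequisite of some $q\in F^+(X)$.

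Next I would show $\mathcal{A}(X)\subseteq F^+(X)\cup X$; this is the only step that uses the learning-space structure. If $q\in F^+(X)$ then $X\cup\{q\}\in\mathcal{K}$ by definition of the outer fringe, and any immediate prerequisite $q'\ne q$ of $q$ must already lie in $X\cup\{q\}$, hence in $X$, for otherwise $X\cup\{q\}$ would not be a knowledge state. Accessibility guarantees $F^+(X)\ne\emptyset$ whenever $X\ne Q$, so the policy is never vacuous. Combining with the previous step, every admissible, coverage-evidenced concept lies in $F^+(X)\cup X$.

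Finally I would invoke separability to pass from ``feasible set'' to ``recommended action'': because the surrogate decomposes as $\sum_t\psi_t(\cdot)$, the minimization reduces to per-concept comparisons, so the chosen action is a single coordinate $t^\star\in\mathcal{A}(X)$ (ties broken arbitrarily)---separability is exactly what prevents the optimizer from returning a jointly optimal batch that is not realizable one concept at a time inside $\mathcal{K}$. Since $t^\star\in F^+(X)\cup X$, either $t^\star\in X$ and the learner's state is unchanged, or $t^\star\in F^+(X)$ and mastering it yields $X\cup\{t^\star\}\in\mathcal{K}$; in both cases the post-practice state is a legal knowledge state, so no prerequisite violation is introduced. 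Iterating this argument and using union-closure (well-gradedness) shows every state visited along a sequence of ERI-driven recommendations stays in $\mathcal{K}$. The main obstacle is not analytic---there is no inequality to fight---but definitional: one must pin down ``admissible'' and ``immediate prerequisite'' precisely enough that the containment $\mathcal{A}(X)\subseteq F^+(X)\cup X$ is airtight, and spell out the otherwise-routine role of separability so that the guarantee concerns the actual recommended item rather than an idealized set-valued action.
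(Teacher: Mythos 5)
Your proposal is correct and follows the same route as the paper's (one-line) proof: the gating assumption confines the feasible set to the outer fringe, and separability ensures the actual minimizer stays inside it. You in fact go further than the paper by handling the ``immediate prerequisites'' clause explicitly---showing such prerequisites of an outer-fringe item must already lie in $X$---which the paper's proof silently elides by asserting the feasible set is exactly $F^+(X)$.
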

\begin{proof}
By construction, the feasible set is $F^+(X)$; separability preserves feasibility.
\end{proof}

\section{EDGE Integration (Future Work)}
The ERI is readily composable with a broader adaptive framework such as \emph{EDGE} (Evaluate $\rightarrow$ Diagnose $\rightarrow$ Generate $\rightarrow$ Exercise) ~\cite{verma2025edge}. Concretely:
\begin{itemize}[leftmargin=*]
    \item \textbf{Inputs to EDGE.} The vector $(M,C,\mathcal{R},P,V,E)$ and blueprint weights act as state features for schedulers and item selection. Low-$\mathcal{R}$ concepts are prioritized for spaced retrieval; low-$M$/$C$ concepts trigger targeted drills.
    \item \textbf{Outputs back to ERI.} EDGE's counterfactual drills and retrieval sessions update topic-level statistics, which in turn update $M$, $C$, and $\mathcal{R}$; pace-aware exercises update $P$; session composition affects $E$ and $V$ stability.
    \item \textbf{Governance.} ERI provides an interpretable, bounded readiness signal and a confidence band; EDGE supplies the \emph{actions} that move the signal. The two systems remain compatible with learning-space admissibility by gating recommendations to the outer fringe.
\end{itemize}
A separate manuscript will detail EDGE's learning-theoretic underpinnings; here we note only the interfaces and invariants required for safe composition.

\section{Limitations and Future Work}
Our framework assumes component regularity and separability; while natural, they may be restrictive for highly entangled domains. Confidence bands rely on independence and boundedness assumptions; relaxing them to account for item overlap and adaptive sampling is future work. Empirical validation, parameter learning, and outcome calibration are deliberately out-of-scope.

\section{Conclusion}
We presented a blueprint-aware, axiomatic readiness index with theoretical guarantees: monotonicity, Lipschitz stability, bounded drift under re-weighting, unique weight design under convex constraints, and compatibility with prerequisite-admissible recommendations. The framework is deployment-ready and can later accommodate richer component estimators, empirical calibration, and integration with EDGE-style adaptive schedulers.

\appendix
\section{Appendix A: Additional Lemmas and Proofs}
\begin{lemma}[Component aggregation Lipschitzness]
If each topic map $m_t$ is $L_m$-Lipschitz w.r.t. $d_t$ and $\sum_t w_t=1$, then $M$ is $L_m$-Lipschitz w.r.t. $d=\sum_t w_t d_t$.
\end{lemma}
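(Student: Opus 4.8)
The claim is that if each topic map $m_t$ is $L_m$-Lipschitz with respect to its own stream metric $d_t$, and the blueprint weights are nonnegative and sum to one, then the aggregate $M = \sum_t w_t m_t(\mathcal{D}_t)$ is $L_m$-Lipschitz with respect to the weighted metric $d = \sum_t w_t d_t$. The plan is to unfold the definition of $M$, apply the triangle inequality to the difference of two weighted sums, and then invoke the per-topic Lipschitz bound inside each term.

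First I would fix two admissible data streams $\mathcal{D} = (\mathcal{D}_t)_{t\in\mathcal{T}}$ and $\mathcal{D}' = (\mathcal{D}'_t)_{t\in\mathcal{T}}$ and write
\[
|M(\mathcal{D}) - M(\mathcal{D}')| = \Big| \sum_{t\in\mathcal{T}} w_t \big( m_t(\mathcal{D}_t) - m_t(\mathcal{D}'_t) \big) \Big|.
\]
Next I would apply the triangle inequality together with nonnegativity of the weights to pull the absolute value inside the sum, obtaining $\sum_t w_t\, |m_t(\mathcal{D}_t) - m_t(\mathcal{D}'_t)|$. Then, applying the hypothesis that each $m_t$ is $L_m$-Lipschitz with respect to $d_t$, each summand is bounded by $w_t L_m\, d_t(\mathcal{D}_t,\mathcal{D}'_t)$. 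Factoring out the common constant $L_m$ gives $L_m \sum_t w_t\, d_t(\mathcal{D}_t,\mathcal{D}'_t) = L_m\, d(\mathcal{D},\mathcal{D}')$, which is exactly the desired inequality.

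There is no real obstacle here; the argument is a one-line chain of the triangle inequality and the per-topic bound, and it uses the convexity of the weights (nonnegativity plus normalization) only to keep the constant from inflating. The one point worth a sentence of care is that $d = \sum_t w_t d_t$ is genuinely a metric (or at least a pseudometric adequate for the Lipschitz statement) on the product space of streams — nonnegativity, symmetry, and the triangle inequality all transfer from the $d_t$ under a nonnegative combination, and the uniform bound $L_m$ does not depend on which topic is being compared. If one wants the same $L_m$ across all topics rather than topic-specific constants $L_{m,t}$, that is precisely the hypothesis as stated; otherwise the identical argument yields Lipschitz constant $\max_t L_{m,t}$ against $d$, or the sharper bound $\sum_t w_t L_{m,t}\, d_t$ if one does not pre-aggregate the metric. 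I would state the result with the uniform constant as written and remark that the non-uniform refinement is immediate.
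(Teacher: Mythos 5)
Your proposal is correct and follows exactly the paper's (one-line) argument: the triangle inequality over the weighted sum plus the per-topic Lipschitz bound, with convexity of the weights ensuring the constant $L_m$ is preserved. The extra remarks on $d$ being a (pseudo)metric and on the non-uniform variant $\max_t L_{m,t}$ are sensible elaborations but do not change the route.
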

\begin{proof}
Immediate by convexity and triangle inequality.
\end{proof}

\begin{lemma}[Explicit Lipschitz constants for exponential retention]
If $r_t(\mathcal{D}_t)=\exp(-\lambda_t \Delta_t)$ with $\Delta_t$ the time since last success and $|\Delta_t-\Delta_t'|\le \delta_t$, then $|r_t-r_t'|\le \lambda_t \delta_t$ for $\delta_t$ small; hence $L_r\le \max_t \lambda_t$.
\end{lemma}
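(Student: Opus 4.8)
The plan is to reduce the per-topic claim to elementary calculus on the scalar map $g_\lambda(x) = \exp(-\lambda x)$ and then lift it to the aggregated retention component via the preceding aggregation lemma. First I would observe that a time-since-last-success is nonnegative, so only the behavior of $g_\lambda$ on $[0,\infty)$ is relevant. There $g_\lambda$ is differentiable with $g_\lambda'(x) = -\lambda \exp(-\lambda x)$, hence $|g_\lambda'(x)| = \lambda \exp(-\lambda x) \le \lambda$ for all $x \ge 0$. By the mean value theorem, for any $\Delta_t,\Delta_t' \ge 0$,
\[
|r_t(\mathcal{D}_t) - r_t(\mathcal{D}_t')| = |g_{\lambda_t}(\Delta_t) - g_{\lambda_t}(\Delta_t')| \le \lambda_t\,|\Delta_t - \Delta_t'| \le \lambda_t \delta_t,
\]
which is exactly the stated bound (and in fact valid for all $\delta_t$, not merely small ones; the ``for $\delta_t$ small'' phrasing corresponds to the infinitesimal first-order estimate $r_t' \approx r_t - \lambda_t r_t (\Delta_t' - \Delta_t)$, which I would record as the linearized version).

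Next, to obtain $L_r \le \max_t \lambda_t$ for $\mathcal{R} = \sum_t w_t r_t$, I would invoke Assumption~\ref{ass:components}(A3) in the form that the per-topic stream metric $d_t$ dominates the gap discrepancy, i.e.\ $|\Delta_t - \Delta_t'| \le d_t(\mathcal{D}_t,\mathcal{D}_t')$. Then each $r_t$ is $\lambda_t$-Lipschitz with respect to $d_t$, hence a fortiori $(\max_t \lambda_t)$-Lipschitz. Applying the Component aggregation Lipschitzness lemma with the common constant $L_r := \max_t \lambda_t$ and the weighted metric $d = \sum_t w_t d_t$ yields $|\mathcal{R}(\mathcal{D}) - \mathcal{R}(\mathcal{D}')| \le (\max_t \lambda_t)\, d(\mathcal{D},\mathcal{D}')$, as claimed.

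There is essentially no hard step; the only mild subtlety is bookkeeping about which metric the Lipschitz constant is measured against—the per-topic bound is phrased in terms of the raw gap difference $\delta_t$, while the aggregated statement needs a single metric on the full stream—so I would be explicit that (A3)'s domination hypothesis is precisely what bridges the two, and that uniformizing the distinct constants $\lambda_t$ to $\max_t \lambda_t$ is what lets the aggregation lemma apply verbatim. For completeness I would also note tightness: as $\Delta_t,\Delta_t' \to 0$ the difference quotient $|r_t - r_t'|/|\Delta_t - \Delta_t'| \to \lambda_t$, so no smaller topic-level constant is possible and $\max_t \lambda_t$ is the best uniform constant this argument can give.
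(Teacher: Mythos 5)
Your proof is correct, and it is worth noting that the paper itself supplies no proof for this lemma at all, so there is nothing to diverge from. Your mean-value-theorem argument ($|g_{\lambda}'(x)| = \lambda e^{-\lambda x} \le \lambda$ on $[0,\infty)$) is the rigorous form of what the lemma's phrasing ``for $\delta_t$ small'' gestures at via a first-order expansion, and it in fact strengthens the statement: the bound $|r_t - r_t'| \le \lambda_t \delta_t$ holds for all $\delta_t$, not merely small ones, so the qualifier in the lemma is unnecessary. Your second step --- passing from the per-topic gap bound to $L_r \le \max_t \lambda_t$ by uniformizing the constants and invoking the aggregation lemma --- is also sound, and you are right to flag that it silently requires the stream metric $d_t$ to dominate the gap discrepancy $|\Delta_t - \Delta_t'|$; the paper never states this bridge explicitly, so making it an explicit hypothesis (or folding it into Assumption~\ref{ass:components}(A3)) is a genuine improvement in precision rather than a deviation.
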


\begin{proposition}[Identifiability under ordinal constraints]
Suppose we observe only pairwise orderings of readiness for a family of profiles. If the set of profiles has full-dimensional convex hull in $\mathbb{R}^6$ and we require $R$ to be linear and monotone, then $\alpha$ is identifiable up to a positive scalar (fixed by the simplex constraint).
\end{proposition}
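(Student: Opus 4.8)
The plan is to prove identifiability in the usual contrapositive form: if two admissible weight vectors $\alpha,\alpha'$ --- both nonnegative (monotonicity) and summing to $1$ (the simplex constraint) --- induce exactly the same weak order of readiness on every pair of profiles, then $\alpha=\alpha'$. First I would rewrite the data: the observed ordering of a pair $(X,Y)$ records $\operatorname{sign}\!\big(R(X)-R(Y)\big)=\operatorname{sign}\!\big(\alpha^\top(X-Y)\big)\in\{-,0,+\}$, so ``same order on all pairs'' is precisely $\operatorname{sign}\!\big(\alpha^\top(X-Y)\big)=\operatorname{sign}\!\big(\alpha'^\top(X-Y)\big)$ for all profiles $X,Y$ in the family $\mathcal P$. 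Let $D=\{X-Y:X,Y\in\mathcal P\}$; it is symmetric about the origin, and full-dimensionality of $\operatorname{conv}(\mathcal P)$ gives $\operatorname{span} D=\mathbb R^6$ (seven affinely independent profiles produce six linearly independent differences).

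The core step is to show that the sign pattern on $D$ pins down $\alpha$ up to a positive scalar. I would phrase the working hypothesis as: $D$ contains a Euclidean ball around the origin --- which holds when $\mathcal P$ is a full-dimensional convex body, since taking $x_0$ in the interior of $\mathcal P$ gives $\mathcal P-x_0\subseteq D$ and the left side contains a ball. On such a ball $\operatorname{sign}(\alpha^\top v)=\operatorname{sign}(\alpha'^\top v)$, and by homogeneity (positive rescaling of $v$ preserves signs) this extends to all $v\in\mathbb R^6$. Hence the open halfspaces coincide, $\{v:\alpha^\top v>0\}=\{v:\alpha'^\top v>0\}$, so their common topological boundary is a single hyperplane and $\ker\alpha^\top=\ker\alpha'^\top$; two linear functionals on $\mathbb R^6$ with the same kernel are scalar multiples, $\alpha'=c\,\alpha$, and agreement of the positive halfspaces forces $c>0$. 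Summing coordinates, $1=\sum_i\alpha'_i=c\sum_i\alpha_i=c$, so $c=1$ and $\alpha'=\alpha$. Monotonicity enters only to confine $\alpha$ to the nonnegative orthant, which is what makes the simplex-normalized representative of the positive ray well defined and nonempty.

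The main obstacle is honesty about the richness of $\mathcal P$: for a genuinely finite family with full-dimensional hull each comparison is a single linear inequality on $\alpha$, so the observed orderings only cut out a polyhedral cone, not a ray, and $\alpha$ is not identifiable. I would therefore either (i) state the proposition for $\mathcal P$ a full-dimensional convex body, as used above, or (ii) weaken to the minimal sufficient hypothesis that the observed indifference pairs $\{X-Y:\alpha^\top(X-Y)=0\}$ linearly span a hyperplane, in which case $\ker\alpha^\top$ is recovered as that span, the orientation from any strictly ordered pair, and the hyperplane/scalar argument closes the proof unchanged. The remaining pieces --- translating orders into sign patterns, the homogeneity extension, and reading off $c=1$ from the simplex --- are routine.
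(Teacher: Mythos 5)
Your proposal is correct and is, in substance, the argument the paper compresses into its one-line proof (``uniqueness of separating hyperplanes consistent with strict orderings on a full-dimensional set''): both reduce the observed orderings to the sign pattern of $\alpha^\top(X-Y)$ on the difference set $D$, recover $\ker\alpha^\top$ from that pattern, and conclude $\alpha'=c\alpha$ with $c=1$ forced by the simplex normalization. What you add, and what the paper does not address, is the observation that the stated hypothesis is genuinely insufficient when the family of profiles is finite: a finite $\mathcal{P}$ can have full-dimensional convex hull while its difference set is a finite spanning set on which two non-proportional nonnegative functionals induce identical sign patterns (e.g.\ in the plane, $(1,1)$ and $(1,2)$ agree in sign on $\{\pm e_1,\pm e_2,\pm(e_1+e_2)\}$), so the comparisons only carve out a polyhedral cone of consistent weights rather than a ray. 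Your two repairs --- taking $\mathcal{P}$ to be a full-dimensional convex body so that $D$ contains a ball about the origin, or assuming the observed indifference differences span a hyperplane so that $\ker\alpha^\top$ is recovered directly --- are both valid and minimal, and either one makes the kernel-identification step airtight where the paper's appeal to ``uniqueness of separating hyperplanes'' silently presupposes it. The remaining steps (homogeneity extension of the sign agreement, equality of open halfspaces, proportionality of functionals with a common kernel, and $c=1$ from $\mathbf{1}^\top\alpha=\mathbf{1}^\top\alpha'=1$) are all sound as written.
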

\begin{proof}
Follows from uniqueness of separating hyperplanes consistent with strict orderings on a full-dimensional set.
\end{proof}

\section{Appendix B: Lagrangian Details}
Consider $\min_{\alpha\in\Delta_5} \tfrac12\|\alpha-\alpha^0\|_2^2 - \eta \sum_i \alpha_i \log \alpha_i$ with linear constraints $A\alpha \le b$. The Lagrangian is
\[
\mathcal{L}(\alpha,\lambda,\nu) = \tfrac12\|\alpha-\alpha^0\|_2^2 - \eta \sum_i \alpha_i \log \alpha_i + \lambda^\top (A\alpha-b) + \nu ( \mathbf{1}^\top \alpha -1).
\]
Stationarity gives $(\alpha_i-\alpha_i^0) - \eta (1+\log \alpha_i) + (A^\top \lambda)_i + \nu = 0$, yielding the exponential-form solution stated in Lemma~\ref{lem:kkt}.

\section{Appendix C: Knowledge Spaces and Admissibility}
Let $(Q,\mathcal{K})$ be a \emph{learning space} (union-closed, accessible, well-graded). If the action generator restricts to $F^+(X)$ and surrogates are separable, ERI-driven policies are admissible. Extensions include \emph{quasi-ordinal} relaxations for noisy prerequisites.

\section*{Acknowledgements}
We thank colleagues for helpful discussions. This paper intentionally avoids empirical validation; any errors are our own.

\end{document}